\title{\LARGE \bf
  Convex Optimization of Bearing Formation Control of Rigid bodies on Lie Group
}
\author{Sara Mansourinasab$^{1}$, Mahdi Sojoodi$^{2}$ and S. Reza Moghadasi$^{3}$
	\thanks{$^{1}$Sara Mansourinasab is with the Department of Electrical and Computer Engineering, Tarbiat Modares University, Tehran, Iran 
		{\tt\small sara.mansouri@modares.ac.ir}}%
	\thanks{$^{2}$Mahdi Sojoodi is with the Department of Electrical and Computer Engineering, Tarbiat Modares University, Tehran, Iran
		{\tt\small sojoodi@modares.ac.ir}}%
	\thanks{$^{3}$S. Reza Moghadasi is with the Department of Mathematical Sciences, Sharif University of Technology, Tehran, Iran
		{\tt\small moghadasi@sharif.ir}}%
}
\begin{document}

\maketitle
\thispagestyle{empty}
\pagestyle{empty}
\noindent
\begin{abstract}
In this paper, the problem of reaching formation for a network of rigid agents over a special orthogonal group is investigated by considering bearing-only constraints as the desired formation. Each agent is able to gather the measurements with respect to other agents in its own body frame. So, the agents are coordinated-free concerning a global reference frame. Attracting to the desired formation is founded on solving an optimization problem for minimizing the difference between the instantaneous bearing between agents and their desired bearing. In order to have a unique global solution, the convex optimization method is implemented. Since the rotation matrices are not convex, the method of convex relaxation of rotation matrices space is used to embed the rotation matrices on the convex hull of the Lie group. Then the control law is designed to achieve the desired bearing with minimum energy consumption. Finally, a simulation example is provided to verify the results.
\end{abstract}
\section{Introduction}
The motion of a rigid body in space is a combination of rotational and translational movements. Group properties are preserved with the rigid body's motions. The set of positions of the rigid body is not a linear space, but a nontrivial smooth manifold. As a result, the rigid body’s orientation can be modeled by a unique rotational matrix, which gives a global and unique representation of its attitude. A detailed comparison of different methods of attitude representation is presented in \cite{hashim2019special}. 
In the case of multiple rigid bodies such as multi-satellite, multiple robotic explorers, and multiple spacecrafts, that bodies interact with each other as a multi-agent system, accurate measuring of mutual attitude plays an important part in consensus and formation of multi-agent systems. Different strategies of multi-agent systems have been extensively studied in many investigations \cite{wang2019coordinated,deng2020pose}. 

Formation control is an actively studied strategy in analyzing multi-agent systems \cite{liu2020robust,mansouri2022bearing}. In formation control, the type of available data that agents have access plays an important role in designing the control strategy. One of these data is the availability of the measuring distances between agents. In distance-based methods, the distances between agents are controlled to achieve the desired formation. In this method, the sensing capability is only defined with respect to local coordinate systems or body frame of agents and no GPS is required. Local coordinates formation control procedures include distance-based and bearing-based methods depending on the type of measuring and constraint parameters.

In bearing-based theory, the desired formation constraints are defined in terms of the direction of the neighboring agents. In other words, the direction of the unit vector aligned the edge that connecting two agents is considered as inter-agent bearing between those two agents. \cite{bishop2011stabilization,ern2012formation,zhao2014finite,sun2021vision,van2020pose,hoang2021robust,zhao2016bearingalmost,zhao2019bearing} have extensively studied different multi-agent strategies based on bearing rigidity theory for converging to the desired formation, generally assume each agent is able to measure bearings and distances of the nearest neighbors. Vision-based devices are suitable for bearing measurements. Optical cameras are one of low-cost and lightweight onboard sensors that are bearing-only sensors. In this paper, formation control for a multi-rigid body system is proposed considering bearing constraints.


Geometrically speaking, roto-translational motion of a rigid body represents a special Euclidean group $SE(3)$ state space configuration which is a Lie group. Using bearing measurements for analyzing such systems is compatible with geometry of the problem \cite{zelazo2014rigidity}. Considering the geometry and the nature of the $SE(3)$ manifold enables the rotation of the entire framework. This results in a synchronous rotation of all agents around their body attached framework with the same angular velocity that is coupled with the entire graph \cite{michieletto2016bearing}. As an illustrative example, multi-satellite systems that are orbiting around the Earth, should have a specific orientation to cover a specific region on Earth. As a result, in multi-rigid body problems, using bearing rigidity concept is beneficial, more accurate and low-cost.

Implementing optimization methods for solving attitude control problems of rigid bodies has been manifested in many applications. Satellite attitude estimation and control \cite{kim2004quadratic}, pose estimation \cite{horowitz2014convex}, and Mars rover pinpoint landing \cite{blackmore2010min,acikmese2007convex} are some of these applications. Motivated by the recent improvements in solving optimization problems, in this paper, minimization of the potential function is used for descending the distance between instantaneous and desired bearing. Since the optimization problem is introduced over the matrix spaces, different computational methods exists for optimization  over such spaces including linear matrix inequalities (LMI) \cite{choeung2023linear},  semidefinite programming (SDP) \cite{wang2020accurate}, positive semidefinite lift (PSD lift) \cite{matni2014convex} , and convex optimization \cite{rosen2015convex}. 
 Although rotational matrices are nonlinear; however, they are Lie group, which leads to the so called almost global convergence in consensus methods. 
On the other hand, the rotational matrices are non-convex, so the optimization problem on rotational matrices is a non-convex optimal problem. Approximating of the initial problem to a convex problem is one of the important methods for global non-convex optimization problem. This method is useful for global optimization using convexification in combinatorial optimal problems \cite{schrijver2003combinatorial}. Semidefinite relaxation over the convex hull of rotational matrices space $conv(SO(3))$ has been studied in various investigations. In \cite{saunderson2014semidefinite}, the $conv(SO(n))$ is represented as the projection of the intersection of PSD cone and an affine subspace with the trace equal to one (PSD lift). According to \cite{saunderson2014semidefinite}, the PSD lift is a spectrahedron, so the properties of a spectrahedron can be used in projecting and lifting process. The PSD lift properties that is presented in defining the $conv(sO(n))$ is considered by others to solve different optimal attitude control   problems. \cite{chen2017constrained} utilizes the optimization over the PSD lift in order to solve cone avoidance attitude control over $SO(n)$. Reaching consensus for rotational bodies over $SO(n)$ is studied in \cite{matni2014convex} exploiting a convex relaxation of the space of rotation matrices. Further investigations from this group presents a method for pose estimation based on spectrahedral representation of the orbitopes of $SE(n)$ in \cite{horowitz2014convex}. Their method is based on the convex relaxation of the non-convex rotational matrices space using LMI representation of the convex hull of $SE(n)$. \\
\, \, \, Inspiring by the methods of $SO(n)$ convexification  presented in the mentioned references, in this paper, we have studied the problem of reaching formation for a group of rigid-body agents with bearing constraints. Based on the desired bearing constraints, a sum of square objective function is defined to be minimized through a convex optimization method. Since both rotational and translational motions of agents is considered in this paper, the kinematics equation of motions makes the problem a non-convex problem. The convex relation method is used based on the replacing $SO(n)$ with its convex hull that is introduced in \cite{saunderson2014semidefinite}. Besides, as the equality  constraints are not convex, the dynamical equations have been disceretized using the $\exp$ map and then approximated. Finally, for the relaxed problem, the optimization control algorithm has been provided. The layout of the paper is as follows. Section II,  reviews some required mathematical preliminaries. Then the problem formulation is overviewed in this section. Section III,  relaxes the non-convex bearing formation optimization problem to a convex model by embeding the rotation matrices on the convex hull of $SO(3)$. Section IV, provides the formation control law algorithm. Discretizing and approximating are applied, so that  the convex optimization problem  is achieved, that can be solved using the results of section III. In section V, a simulation example is provided to verify the results of the paper. Section VI is dedicated to the conclusion.

\section{Mathematical Preliminaries and Problem Formulation}
\subsection{Notation}
The specific notations that are utilized in this paper are briefly summarize in this section. $\mathbb{S}^m$ is the space of $ m \times m$ real symmetric matrices and $\mathcal{S}^m_+ \subset \mathbb{S} $ depicts the  PSD matrices. $Tr(X)$ shows the trace of a square matrix $X$. For any matrices $X,Y \in \mathbb{R}^ {m \times n}$ the inner product denotes  $ \left< X,Y \right> = Tr(X^TY)$.

\subsection{Mathematical Preliminaries}
Consider a team of $n$ rigid bodies, as agents of the system with undirected communication graph $\mathcal{G}=(\mathcal{V},\mathcal{E})$, is located in the configuration space $SE(3)$, while  $\mathcal{V}=\{1,...,N\}$ is the set of vertices, $\mathcal{E}=\{(i,j) \in \mathcal{V} \times \mathcal{V} | j \in N_i\}$ represents the set of edges, and $N_i$ indicates the neighbors of agent $i$. 
As the system of rigid-body agents with rotational and translational motions is considered in this paper,  the set of rigid body's motions compounds the especial Euclidean group $SE(3)$.
For each agent $v_i \in \mathcal{V}$, the state $ R_i \in SO(3)$ is devoted such that the matrix group $ SO(3)=\{R \in \mathbb{R}^{3 \times 3} : R^TR=I_3, det(R)=1\}$ is the space of 3D rotations, where $I_3$ indicates the 3-dimensional identity matrix. The tangent space of $SO(3)$ in the identity element of the group, is the Lie algebra of $SO(3)$ written as \linebreak $\mathfrak{so}(3)=T_e SO(3)= \{ \Omega \in \mathbb{R}^ {3 \times 3} : \Omega^T = -\Omega \} $, which is the space of all $3 \times 3$  skew symmetric matrices. The tangent space of $SO(3)$ at any group member \linebreak $R$ is $ T_R SO(3)=\{RV: V \in \mathfrak{so}(3)\}$. 
There is a map between the tangent vector on the Lie group $ \Omega=\hat{\omega} \in \mathfrak{so}(3)$ with a vector in $\omega \in \mathbb{R}^3$ using the following $hat (\hat{.}) $ and $vee (\check{.}) $  maps

\begin{align}
	\small
	\omega=
	\left(
	\begin{array}{c}
		\omega_1 \cr\omega_2\cr\omega_3
	\end{array}
	\right)
	\in \mathbb{R}^3
	\stackrel[(\check{.})]{(\hat{.})}{\rightleftharpoons}
	\hat{\omega}=
	\left(
	\begin{array}{ccc}
		0&-\omega_3&\omega_2\cr\omega_3&0&-\omega_1\cr-\omega_2&\omega_1&0
	\end{array}
	\right)
	\in \mathfrak{so}(3). \label{eq25}
\end{align}
For small $\phi$, the $\exp$ map can be approximated  as
\begin{align}
	\exp[\alpha \phi] & =I+\alpha \phi \label{eq21}\\
	\exp[\alpha \phi] & =I+\alpha \phi + \frac{\alpha ^2}{2!} \phi^T \phi \label{eq22}
\end{align}
Although the space of rotational matrices is nonlinear, it is a Lie group. On the other hand, the set of $ n \times n$ rotational matrices is non convex. An important approach for global optimization of a non convex problem contains approximating of the original non convex problem to a convex optimization problem. Therefore, the convexification of non convex problems plays an important role in optimization problems. First of all, the problem is relaxed to the convex hull of $SO(3)$, which is shown by $conv(SO(3)$. Using the results of \cite{saunderson2015semidefinite,chen2017constrained}, the convex hull of $SO(3)$  is parameterized as spectrahedral projection of PSD lift. The $conv(SO(3))$ admits a $ 2^{(n-1)} \times 2^{(n-1)}$  spectrahedral representation. According to definition, a PSD lift of a convex set $ \mathcal{P} $ can be regarded as the linear projection of a spectrahedron.  In fact, the convex set $\mathcal{P}$ admits a PSD lift of size $d$ if it can be written as the linear projection of an affine slice of positive semidefinite cons $\mathcal{S}^d_+$ \cite{fawzi2017equivqrianr}.

PSD lift is defined as a set of PSD matrices with a unit trace as follows \cite{chen2017constrained}
\begin{align*}
	\mathcal{PL}=\left\{ Z \in \mathcal{S}^m_+ | \text{Tr}(Z)=1 \right\}.
\end{align*}
$\overline{\mathcal{PL}}$ which is the set of extreme points of $\mathcal{PL}$, is defined as
\begin{align*}
	\overline{\mathcal{PL}}=\left\{ Z \in \mathcal{S}^m_+ | \text{Tr}(Z)=1, \text{rank}(Z)=1 \right\}.
\end{align*}
Some results from PSD lift representation of $SO(3)$ are briefly summarized from \cite{saunderson2014semidefinite} as follows.

A matrix $\mathcal{A}$ has a spectrahedral representation such that the elements of it is a set of  $(A_{ij})_{1 \leq i,j \leq n}$ matrices with $A_{ij} \in \mathbb{S}^{2^{n-2}}$ and

\begin{align}
	A_{ij}=-P_e^T \lambda_i \rho_j P_e, \qquad 1 \leq i,j \leq n  \label{eq26}
\end{align}
while $P_{e} \in \mathbb{R}^{2^n \times 2^{n-1}}$ is the matrix with orthonormal columns 
\begin{align*}
	P_e=\frac{1}{2} 
	\left[ 
	\begin{array}{c}
		1\\
		1
	\end{array}
	\right]
	\otimes \overbrace{\sigma_0 \otimes \cdots \otimes \sigma_0}^{n-1}+\frac{1}{2}
	\left[ 
	\begin{array}{c}
		1\\
		-1
	\end{array}
	\right]
	\otimes \overbrace{\sigma_1 \otimes \cdots \otimes \sigma_1}^{n-1}, 
\end{align*}
and
\begin{align*}
	\sigma_0 &=
	\left[
	\begin{array}{cc}
		1 & 0 \\
		0 & 1
	\end{array}
	\right],
	\qquad
	\sigma_1=
	\left[
	\begin{array}{cc}
		1 & 0 \\
		0 & -1
	\end{array}
	\right], 
	\qquad
	\sigma_2=
	\left[
	\begin{array}{cc}
		0 & -1 \\
		1 &  0
	\end{array}
	\right],  \\
	\lambda_i &=\overbrace{\sigma_1 \otimes \cdots \otimes \sigma_1}^{i-1} \otimes \sigma_2 \otimes \overbrace{\sigma_0 \otimes \cdots \otimes \sigma_0}^{n-i}\\
	\rho_i &=\overbrace{\sigma_0 \otimes \cdots \otimes \sigma_0}^{i-1} \otimes \sigma_2 \otimes \overbrace{\sigma_1 \otimes \cdots \otimes \sigma_1}^{n-i}.
\end{align*}
The following theorem concludes that $conv(SO(3))$ can be written as a projection $\mathcal{A}$  of the PSD lift of size $2^{n-1} \times 2^{n-1}$. In fact, this theorem represents a convex hull definition for $SO(n)$. \\
\textbf{Theorem} \textbf{\cite{chen2017constrained}}:
The convex hull of $SO(n)$ is expressed as a projection of $2^{n-1} \times 2^{n-1}$ matrices with $trace=1$ as follows
\begingroup
\small
	\begin{align} 
		\text{conv}(SO(n))= \left\{ 
		\left[
		\begin{array}{ccc}
			\left< A_{11},Z \right> & \cdots & \left< A_{1n},Z \right>\\
			\vdots &  \ddots & \vdots\\
			\left< A_{n1},Z \right> & \cdots & \left< A_{nn},Z \right>\\
		\end{array}
		\right]
		\bigg| \ \  Z \geq 0, \ \ \text{tr}(Z)=1
		\right\}.  \label{27}
	\end{align}
\endgroup
In other words, the above theorem reveals that the matrix $X \in \text{conv}(SO(n))$ if and only if there exists a matrix $Z \in \mathbb{S}_+^{2^{n-1}}$ with $Tr(Z)=1$ such that
	\begin{align}\label{eq5-62} 
		X=\mathcal{A}(Z) \triangleq \left[
		\begin{array}{ccc}
			\left< A_{11},Z \right> & \cdots & \left< A_{1n},Z \right>\\
			\vdots &  \ddots & \vdots\\
			\left< A_{n1},Z \right> & \cdots & \left< A_{nn},Z \right>\\
		\end{array}
		\right],   
	\end{align}
and $A_{ij}$ is defined earlier. As a result,  $conv(SO(n))$ can be expressed as
\begin{align}
	\text{conv} (SO(n))= \left\{ \mathcal{A}(Z) \big| Z\geq 0, \text{tr}(Z)=1 \right\}. \label{eq28}
\end{align}
Besides, the map $ \mathcal{A} ^ \dagger : \mathbb{R} ^{n \times n} \to \mathcal{S}^{2^{n-1}}  $ is the adjoint  operator related to the inner product $ \left< \cdot , \cdot \right>$, and is calculated as
\begin{align}
	\mathcal{A}^\dagger (Y)= \sum_{i,j} A_{ij} Y_{ij}. \label{eq29}
\end{align}

\subsection{Problem Formulation}
 
 The following equations represents the kinematic equations of motions of the bodies
\begin{align}
\left\{ 
\begin{array}{ccc}
\dot{R}_i &= &R_i \hat{\omega}_i    \label{eq1} \\
\dot{p}_i &= &R_i v_i 
\end{array}  \right.
\end{align}
where $p_i=[p_{xi}, p_{yi}, p_{zi}]^T \in \mathbb{R}^3$ is the position of the $i$-th agent and $R_i \in SO(3)$ is the rotation matrix associated with the orientation of it. Furthermore, $v_i , \omega_i \in \mathbb{R}^3$ are respectively the linear and angular velocity of agent $i$ with respect to the body frame. For notation convenience, we use the position and attitude of the complete configuration in the stacked form as $\chi(i)=(\chi_p(i),\chi_r(i))=(p_i,R_i) \in SE(3)$.

The relative bearing measurement between agents $i$ and $j$ associated to the edge $e_k=(v_i,v_j)$ is defined as  \\
\begin{align}
b_k=b_{ij}=R_i^T \frac{p_i-p_j}{\|p_i-p_j\|}=R_i^T\bar{p}_{ij}	\label{eq2}
\end{align}
and
\begin{align*}
\bar{p}_{ij}=d_{ij} p_{ij}, \ \ p_{ij}=p_i-p_j, \ \ d_{ij}=\frac{1}{\|p_i-p_j\|}.
\end{align*}
\textbf{Definition 1}. The bearing rigidity function for a framework  $(\mathcal{G}, \chi )$ of n-agent is expressed as the map
\begin{align}
b_{\mathcal{G}}: SE(3)^n \to  \mathbb{S}^{2m}   \nonumber \\
b_{\mathcal{G}}(\chi) = [ b_1^T \dots b_m^T]^T  \label{eq3} 
\end{align}
that can be written in the following compact form
\begin{align}
b_\mathcal{G}(\chi)=\text{diag}(d_{ij} R_i^T)\bar{E}^Tp .  \label{eq4}
\end{align}
The bearing rigidity matrix in $SE(3)$ is defined as the gradient of the rigidity function as $ \text{B}_\mathcal{G}(\chi)= \nabla_{\chi} b_\mathcal{G}(\chi) \in \mathbb{R}^{3m \times 6n} $ that can be expressed as
\begin{align}
\text{B}_\mathcal{G}(\chi) &= \left[ \frac{\partial b_{ij}}{\partial p_i} \quad \frac{ \partial b_{ij}}{\partial R_i} \right]  \nonumber\\
& =\left[- \text{diag}(d_{ij} R_i^TP(\bar{p}_{ij})) \bar{E}^T \quad -\text{diag}(R_i^T\widehat{\bar{p}_{ij}}) \bar{E}_o^T \right]   \label{eq5}
\end{align}
such that $\widehat{\bar{p}_{ij}}$ is defined as (\ref{eq25}).

$\text{Tr}(A)$ represents the trace of matrix $A$.

\section{Convex Relaxation Formation Algorithm in $SO(3)$}
In this section, the formation control problem in $SO(3)$ with no common reference frame is studied. The desired formation is proposed as bearing only constraints. A network of $n$ rigid-bodies as agents is considered with the kinematic introduced in (\ref{eq1}).
The potential function $\phi$ is defined to describe bearing constraints. For converging to the desired formation, the potential function is only introduced in terms
of the bearing only constraints as
\begin{align}
\phi(\chi)=\frac{1}{2} \| b_\mathcal{G}^*-b_\mathcal{G} \|^2  \label{eq6}  
\end{align}
while $b_\mathcal{G}^*$ is the desired bearing rigidity function. This function is the stack form of the following function
\begin{align}
	\phi=\frac{1}{2} \sum_{(i,j) \in \mathcal{E}} \|b_{ij}^*-b_{ij} \|^2. \label{eq7}
\end{align}
In order to converge to the desired formation $ b_{\mathcal{G}}^*$, the cost function should be minimum as much as possible. Solving an optimization problem over the potential function (\ref{eq7}) is considered as the formation control strategy. In order to reach this goal,  the following minimization problem is introduced
\begin{align}
	\min_{R_i} \quad & \phi \label{eq8}\\
	 \text{s.t.} \quad &R_i \in SO(3).   \nonumber
\end{align}
Substituting from (\ref{eq2}) yields
\begin{align}
\min_{R_i} \quad &\frac{1}{2} \sum_{(i,j) \in \mathcal{E}} \|b_{ij}^*-R_i^T \bar{p}_{ij} \|^2  \label{eq9}  \\
\text{s.t.} \quad & R_i \in SO(3).   \nonumber
\end{align}
It is straightforward to conclude
\begin{align}
	\|b_{ij}^*-R_i^T \bar{p}_{ij} \|^2 & = \left< b^*_{ij}-R_i^T \bar{p}_{ij},b^*_{ij}-R_i^T \bar{p}_{ij}\right> \nonumber \\
	& =\left<b_{ij}^*, b_{ij}^*\right> -2\left<b_{ij}^* , R_i^T \bar{p}_{ij}\right>-\left<R_i^T \bar{p}_{ij},R_i^T \bar{p}_{ij}\right> \nonumber\\
	&=\|b_{ij}^*\|^2-2\left<b_{ij}^*,R_{ij}^T \bar{p}_{ij}\right>+ \|R_i^T\bar{p}_{ij}\|^2 \nonumber \\
	&=\|b_{ij}^*\|^2-2\left<b_{ij}^*,R_{ij}^T \bar{p}_{ij}\right>+1  \label{eq10}
\end{align}
Since the first and last terms are scalars and the middle term is just a non-constant term, so the minimization problem can be implemented on this term as follows
\begin{align}
	\min_{R_i} \quad & \sum_{(i,j) \in \mathcal{E}} -\left<b_{ij}^*,R_{ij}^T \bar{p}_{ij}\right> \nonumber \\
	\text{s.t.} \quad & R_i \in SO(3).	\label{eq11}
\end{align}
The objective function in (\ref{eq11}) is in the semidefinite programming form. However, as the rotational matrices are non-convex, so the above optimization problem is a non-convex problem. As a result, the original problem is convexified to the convex optimization problem by substituting the optimization over the convex hull $conv(SO(3))$ rather than $SO(3)$. 
\begin{align}
\max_{R_i} \quad & \sum_{(i,j) \in \mathcal{E}} \left< b_{ij}^*,R_{ij}^T \bar{p}_{ij}\right> \nonumber \\
\text{s.t.} \quad & R_i \in conv (SO(3)) 	 \label{eq12}
\end{align}
while $conv(SO(3))$ is defined in (\ref{eq27}). It is worth mentioning that, the optimization is just considered on the rotational motion of the body, while the complete optimization control problem is investigated in  section IV. 
Inspiring by \cite{matni2014convex}, the following lemma can be concluded.
\newtheorem{lem}{Lemma}
\begin{lem}
	$R^* = \mathcal{A}(Z^* ) \in SO(3)$, $Z^*=\eta ^T \eta$ is an optimal solution for the following optimization problem,
	\begin{align}
		\max_{R_i \in conv(SO(3))} \left< b^*_{ij}, b_{ij} \right> \label{eq14}
	\end{align}
	 while $\eta$ is the orthonormal eigenvector corresponding to the largest eigenvalue of $ \mathcal{A}^ \dag (\bar{p}_{ij} {b^*_{ij}}^T) $.
\end{lem}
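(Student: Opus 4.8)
The plan is to turn the nonconcave-looking alignment objective \eqref{eq14} into a trace pairing against a fixed matrix, transport that matrix through the lifting map $\mathcal{A}$ using its adjoint $\mathcal{A}^\dagger$, and thereby reduce the problem to a textbook semidefinite program whose optimum is a largest-eigenvalue computation. First I would expand the objective with $b_{ij}=R_i^T\bar{p}_{ij}$ and exploit the cyclic invariance of the trace:
\begin{align*}
\left< b^*_{ij}, b_{ij}\right> = \mathrm{Tr}\big((b^*_{ij})^T R_i^T \bar{p}_{ij}\big) = \mathrm{Tr}\big(R_i^T \bar{p}_{ij}(b^*_{ij})^T\big) = \left< R_i, \bar{p}_{ij}(b^*_{ij})^T\right>,
\end{align*}
so that maximizing the bearing alignment is the same as maximizing a linear functional of $R_i$ against the rank-one outer-product data matrix $\bar{p}_{ij}(b^*_{ij})^T$.

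Next I would invoke the representation \eqref{eq28}, writing every feasible $R_i\in conv(SO(3))$ as $R_i=\mathcal{A}(Z)$ with $Z\succeq 0$ and $\mathrm{Tr}(Z)=1$, and then move the data matrix onto the lift variable via the adjoint identity $\left<\mathcal{A}(Z),Y\right>=\left<Z,\mathcal{A}^\dagger(Y)\right>$ of \eqref{eq29}. Setting $M:=\mathcal{A}^\dagger(\bar{p}_{ij}(b^*_{ij})^T)\in\mathbb{S}^{2^{n-1}}$, the problem becomes
\begin{align*}
\max_{Z}\ \left< Z,M\right>\quad\text{s.t.}\quad Z\succeq 0,\ \mathrm{Tr}(Z)=1 .
\end{align*}
This is the maximization of a linear functional over the spectrahedron $\mathcal{PL}$, whose optimal value is exactly $\lambda_{\max}(M)$ and is attained at the rank-one extreme point $Z^*=\eta^T\eta$, where $\eta$ is a unit eigenvector of $M$ associated with its largest eigenvalue. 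This is the Rayleigh/Ky~Fan characterization: diagonalizing $M$ reduces $\left< Z,M\right>$ to a convex combination of the eigenvalues of $M$ weighted by the diagonal of $Z$ in the eigenbasis, and that combination is maximized by placing all weight on the top eigenvalue.

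The step I expect to be the crux is confirming that the lifted optimizer is a genuine rotation, that is, $R^*=\mathcal{A}(Z^*)\in SO(3)$ and not merely an interior point of its convex hull. Here I would argue from extreme points: $conv(SO(3))$ is the convex hull of the compact set $SO(3)$, every rotation has Frobenius norm $\sqrt{3}$ and hence lies on a common sphere, so the extreme points of $conv(SO(3))$ are precisely the elements of $SO(3)$, and a linear functional over a compact convex set attains its maximum at an extreme point. Since $Z^*$ is rank one with unit trace it belongs to $\overline{\mathcal{PL}}$, so by the representation theorem of \cite{chen2017constrained} its image $\mathcal{A}(Z^*)$ is an extreme point of $conv(SO(3))$, hence lies in $SO(3)$; substituting back gives $\left< b^*_{ij},(R^*)^T\bar{p}_{ij}\right>=\left< Z^*,M\right>=\lambda_{\max}(M)$, which realizes the optimum asserted in the lemma. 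The one caveat I would flag is degeneracy: if $\lambda_{\max}(M)$ is not simple, the maximizing face of $\mathcal{PL}$ contains more than one point, but that face still possesses a rank-one extreme maximizer, so any top eigenvector $\eta$ continues to furnish an optimal $Z^*=\eta^T\eta$ with $\mathcal{A}(Z^*)\in SO(3)$.
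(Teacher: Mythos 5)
Your proposal is correct, and its first two steps coincide with the paper's: both rewrite $\left< b^*_{ij}, b_{ij} \right> = \left< \bar{p}_{ij} (b^{*}_{ij})^T, R_i \right>$ by cyclicity of the trace, then pass through the spectrahedral representation \eqref{eq28} and the adjoint identity \eqref{eq29} to obtain the SDP $\max \left< \mathcal{A}^\dagger(\bar{p}_{ij}(b^*_{ij})^T), Z\right>$ over $Z \succeq 0$, $\mathrm{Tr}(Z)=1$. Where you genuinely diverge is in how optimality of the rank-one point is certified. The paper goes through convex duality: it notes that $\mathcal{A}$ is an affine isomorphism between $conv(SO(3))$ and the lift, checks Slater's condition, invokes strong duality and the KKT conditions, and then cites \cite{matni2014convex} for the conclusion that the optimal lift variable is the outer product of the top eigenvector. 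You instead close the argument directly and self-containedly: the Rayleigh/Ky~Fan characterization gives $\max_{Z} \left<Z,M\right> = \lambda_{\max}(M)$ attained at the top-eigenvector rank-one matrix, and your sphere argument (every rotation has Frobenius norm $\sqrt{3}$, so by Milman-type reasoning the extreme points of $conv(SO(3))$ are exactly $SO(3)$) independently certifies $\mathcal{A}(Z^*) \in SO(3)$, a fact the paper imports from \cite{matni2014convex} and \cite{chen2017constrained} without proof. Your route is more elementary and avoids any appeal to duality; what the paper's KKT framing buys is the machinery behind its subsequent corollary, namely that simplicity of the top eigenvalue yields uniqueness of $Z^*$ --- a degenerate case you correctly flag and resolve (any top eigenvector still furnishes an optimal rank-one maximizer), which is arguably cleaner than the paper's treatment. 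One cosmetic point: following the lemma's statement you write $Z^*=\eta^T\eta$, which for a column eigenvector $\eta$ is a scalar; it should read $Z^*=\eta\eta^T$, as in the paper's own proof.
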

\begin{proof}
	The problem is reformulated as
	\begin{align}
		\max_{R_i \in conv(SO(3))} \left< \bar{p}_{ij} {b^{*}_{ij}}^T, R_i \right> \label{eq15}
	\end{align}
	since
	\begin{align*}
		\left< b^*_{ij}, b_{ij} \right> = \left< b^*_{ij}, R_i^T \bar{p}_{ij} \right> = \left< b^*_{ij} \bar{p}^T_{ij}, R_i^T \right> = \left< \bar{p}_{ij} {b^{*}_{ij}}^T, R_i \right>. 
 	\end{align*}
	According to (\ref{eq27}), the convex hull of $SO(3)$ is parameterized as the projection $\mathcal{A}$ of the PSD lift. Therefore, the optimization problem (\ref{eq14}) is replaced by 
	\begin{align}
		\max_{Z} & \quad \left< \mathcal{A}^\dag (\bar{p}_{ij} {b^*_{ij}}),Z \right> \nonumber \\
		\text{s.t.} & \quad 
		\left\{
		\begin{array}{rrr}
			Z_i  \ge & 0  \quad \\
			\text{Tr}(Z_i)  =  & 1  \quad  \\
		\end{array}
		\right. \label{eq16}
	\end{align}
	Since the mapping $\mathcal{A}$ is an affine isomorphism between $conv(SO(3))$ and PSD lift, so a unique solution in $conv(SO(3))$ is achieved from the PSD lift under this projection \cite{chen2017constrained}. As a result, Slater's condition is satisfied, and thereafter strong duality is preserved. As the strong duality is satisfied, any pair of primal and dual optimal points must satisfy the KKT conditions. According to \cite{matni2014convex}, $\mathcal{A} (Z^*)$ is an extreme point of $conv(SO(3))$ while $Z^* \triangleq \eta \eta^T $ with rank=1 and $\eta$ equals the orthonormal eigenvector corresponding to the largest eigenvalue $\upsilon$ of $\mathcal{A}^\dag (\bar{p}_{ij} {b^*_{ij}})$.
\end{proof}
\newtheorem{Corollary}{corollary}
\begin{Corollary}
	The optimal solution $Z^*=\eta \eta^T$ is a unique solution to the optimization problem (\ref{eq16}) if the largest eigenvalue of $\mathcal{A}^\dag (\bar{p}_{ij} {b^*_{ij}})$ has multiplicity equals 1. 
\end{Corollary}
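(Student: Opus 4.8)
The plan is to recognize (\ref{eq16}) as the standard problem of maximizing a linear functional over the spectraplex, for which the optimal value is the largest eigenvalue of the cost matrix and the maximizer is the associated rank-one projector. Write $M \triangleq \mathcal{A}^\dag(\bar{p}_{ij} {b^*_{ij}}^T) \in \mathcal{S}^{2^{n-1}}$, which is symmetric because $\mathcal{A}^\dag$ takes values in $\mathcal{S}^{2^{n-1}}$, so that the objective is $\left< M, Z\right> = \text{Tr}(MZ)$; let $\upsilon$ denote its largest eigenvalue and $\eta$ a corresponding unit eigenvector. First I would re-establish the upper bound already used in the Lemma: since $\upsilon I - M \succeq 0$ and any feasible $Z \succeq 0$, one has $\text{Tr}((\upsilon I - M)Z) \ge 0$, whence $\text{Tr}(MZ) \le \upsilon\,\text{Tr}(Z) = \upsilon$, with equality at $Z^* = \eta\eta^T$ because $\eta^T M \eta = \upsilon$. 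This recovers optimality of $\eta\eta^T$ and fixes $\upsilon$ as the optimal value.

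For uniqueness I would argue by complementary slackness. Let $Z$ be any optimal point, so $\text{Tr}(MZ) = \upsilon$ and hence $\text{Tr}((\upsilon I - M)Z) = 0$. Both $\upsilon I - M$ and $Z$ are positive semidefinite, and the trace of the product of two positive semidefinite matrices vanishes only when the product itself vanishes: writing $Z = \sum_k p_k u_k u_k^T$ with $p_k > 0$ and $u_k$ orthonormal, each summand $p_k\, u_k^T(\upsilon I - M)u_k$ is nonnegative and their sum is zero, forcing $(\upsilon I - M)u_k = 0$ for every $k$. Thus the range of $Z$ lies in $\ker(\upsilon I - M)$, which is precisely the eigenspace of $M$ for the eigenvalue $\upsilon$.

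Here the multiplicity hypothesis enters decisively: when $\upsilon$ is simple, this eigenspace is the line $\text{span}(\eta)$, so the range of $Z$ sits inside $\text{span}(\eta)$ and therefore $Z = c\,\eta\eta^T$ for some $c \ge 0$; the constraint $\text{Tr}(Z) = 1$ together with $\|\eta\| = 1$ forces $c = 1$, i.e. $Z = \eta\eta^T = Z^*$. Finally, because the affine map $\mathcal{A}$ is an isomorphism onto $conv(SO(3))$ (as invoked in the Lemma), uniqueness of $Z^*$ transfers to uniqueness of the recovered rotation $R^* = \mathcal{A}(Z^*)$.

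I expect the main obstacle to be the uniqueness step rather than the bound, and specifically the \emph{range-containment} argument above. This is exactly where the strict spectral gap between $\upsilon$ and the next eigenvalue is indispensable: if $\upsilon$ were degenerate, $\ker(\upsilon I - M)$ would be higher-dimensional and would carry an entire face of optimal density matrices, so the conclusion $Z = \eta\eta^T$ would genuinely fail. Making the passage from $\text{Tr}((\upsilon I - M)Z)=0$ to $(\upsilon I - M)Z = 0$ fully rigorous — rather than merely invoking complementary slackness abstractly — is the one place the proof must be handled with care.
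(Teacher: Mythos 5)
Your proof is correct: the upper bound from $\upsilon I - M \succeq 0$, the complementary-slackness step forcing the range of any optimizer into $\ker(\upsilon I - M)$ (with the passage from $u_k^T(\upsilon I - M)u_k = 0$ to $(\upsilon I - M)u_k = 0$ justified by positive semidefiniteness), and the simple-eigenvalue hypothesis collapsing that kernel to $\mathrm{span}(\eta)$ so that $Z = \eta\eta^T$ are all sound. The paper itself omits the proof entirely, deferring to \cite{matni2014convex}, and your argument is precisely the standard spectraplex argument used in that reference, so you have supplied in full what the paper only cites.
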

\begin{proof}
	Because of the same procedure, the proof is omitted. For more information, please refer to \cite{matni2014convex}.
\end{proof}

\begin{figure*}[!ht]
	\begin{minipage}[c]{0.5\textwidth}
		\vspace*{0pt}
		\centering
		\includegraphics[width=1\textwidth]{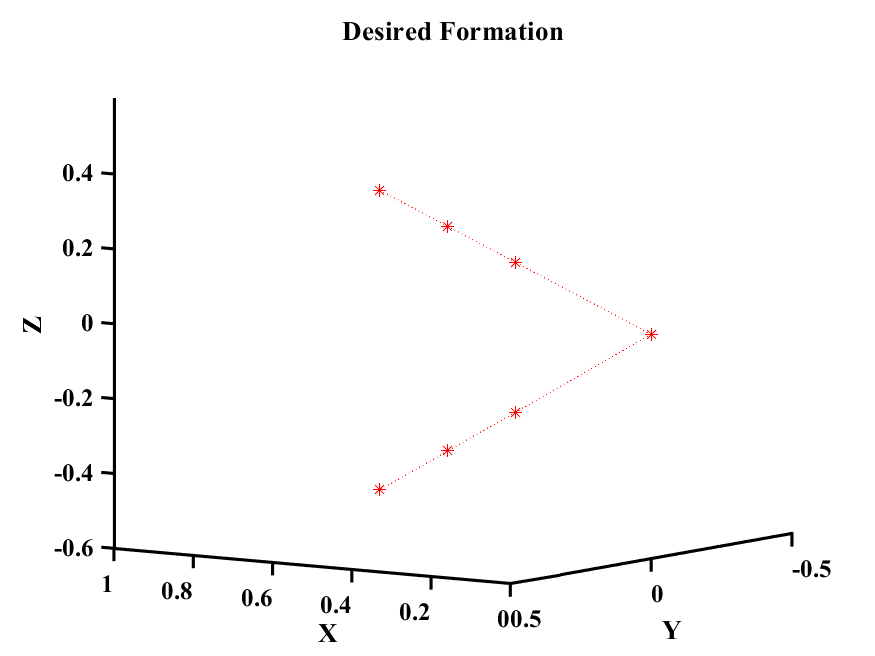}
		\centering
		\caption{The desired formation in a network of 7 agents}
		\label{fig1}
	\end{minipage}
	\begin{minipage}[c]{0.5\textwidth}
		\vspace*{0pt}
		\centering
		\includegraphics[width=1\textwidth]{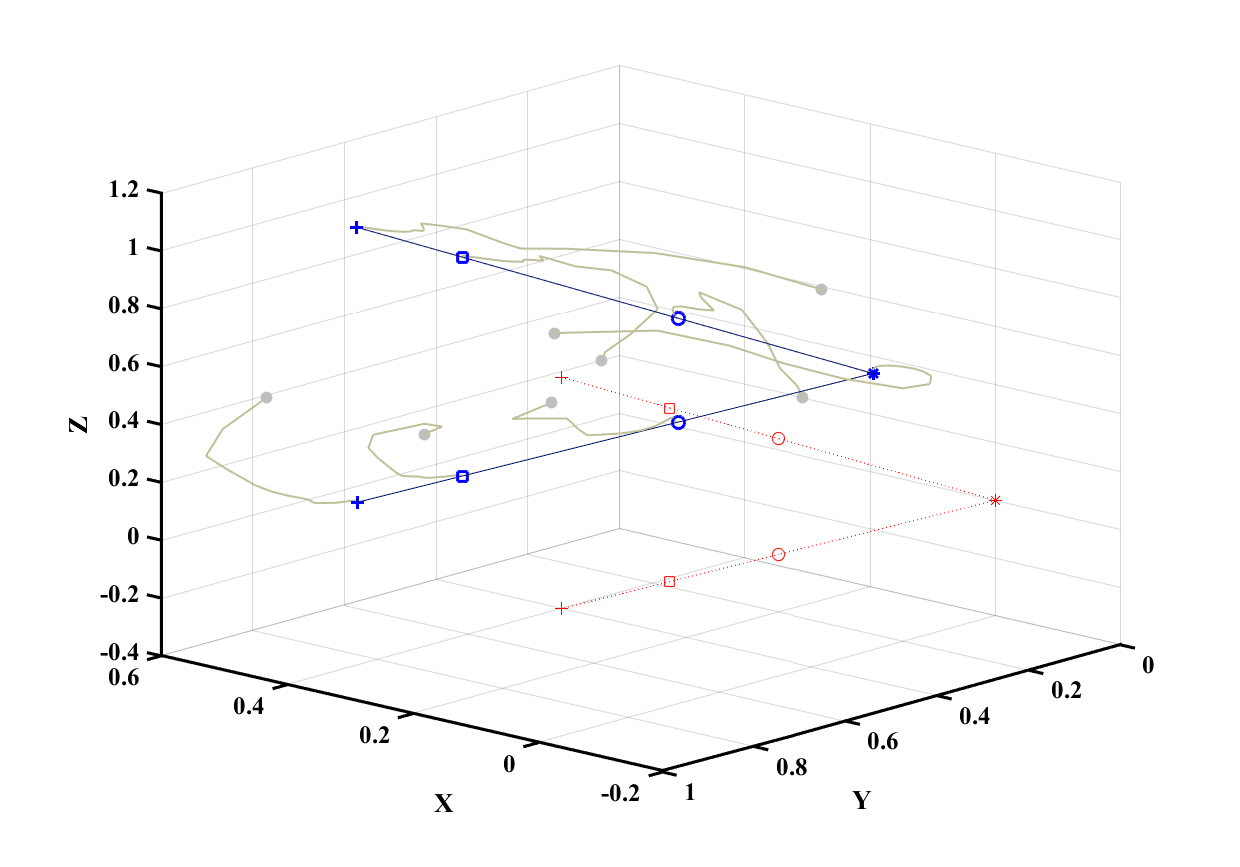}
		\centering
		\caption{Reaching the final formation using the convex optimization in a network of 7 rigid-body agents }
		\label{fig2}
	\end{minipage}
\end{figure*}

\section{Formation Control Law Design}
In previous section, the optimal solution $R^*$ is presented which minimizes the optimization problem (\ref{eq8}). In this section, the problem is redefined so that the formation problem  for multi rigid bodies system (\ref{eq1}) is achieved with minimum energy. In this case, the objective function (\ref{eq6}) changes into the following form
\begin{align}
	\phi &=\frac{1}{2} \|b^*_{\mathcal{G}}-b_{\mathcal{G}}\|^2 + \frac{1}{2} \left< W^1v,v \right> +\frac{1}{2} \left< W^2 \omega,\omega \right> \nonumber \\
	&= \frac{1}{2} \|b^*_{\mathcal{G}}-b_{\mathcal{G}}\|^2+\frac{1}{2} \left< Wu,u \right> \label{eq17}
\end{align}
while $W: \mathfrak{se}(3) \to \mathfrak{se}^*(3)$ is weighting function, $ W=\left( W^1 , W^2 \right) $ with symmetric positive definite matrices $W^1,W^2 \in \mathbb{R}^{3 \times 3}$, and  $u=(v, \omega)^T$.
Pursuing the same procedure as in section III, the optimization problem is specified as
\begin{align}
	{\underset{v,\omega}{min}}&\quad \sum_{(i,j) \in \mathcal{E}} \left< b^*_{ij}, b_{ij} \right>+ \frac{1}{2} \left< W_iu_i,u_i \right> \nonumber \\
	\text{s.t.}&\quad 
	\left\{
		\begin{array}{ccc}
		R_i&\in &conv(SO(3))   \\
		\dot{R_i}&=&R_i \hat{\omega}_i   \\
		\dot{p_i}&=&R_iv_i. 
		\end{array}
		\right. \label{eq18}
\end{align}

Before reforming the above problem to  the convex form, we firstly disceretize the dynamical systems' equations. For the attitude kinematics of each rigid body, $R_i$, the disceretized kinematic equations is as follows
\begin{align}
	R_i(k+1) &= \exp_{R_i(k)}(\epsilon R_i^{T}(k)\hat{\omega}_i(k)) . \label{eq20}
\end{align}
At each iteration $k$, the algorithm is solved using the current state $R_i(k)$ and moves along the geodesic opposite to the gradient so that the updated state $R_i(k+1)$ is obtained. In fact, the map $\exp_{R_i}(\cdot)=R_i \exp (\cdot)$ indicates the geodesic in Riemannian geometry. Using the $exp$ map is explained such that by moving from $R_i(k)$ along  geodesic, $R_i(k+1)$ is extracted. In fact, after solving the ODE equation, the calculated point is not a member of SO(3), so this point should be projected from the tangent space $\mathfrak{so}(3)$ to $SO(3)$ by $\exp_{R_i}$ map.  However, using $\exp_{R_i}$ means that moving forward from the initial point $R_i(k)$ in the amount of the geodesic which moves with $\epsilon\hat{\omega}_i$, the achieved point is located on $SO(3)$ and there is no need for projection. By choosing the proper step size $\epsilon$, the algorithm converges \cite{mansouri2022optimal}.

 As the equation (\ref{eq20}) is non-convex, the $\exp$ term is replaced by sequential linearization. As in \cite{jeon2009kinematic}, using (\ref{eq21}),(\ref{eq22}),  the first and second order of the linearized equation (\ref{eq20}) yields, respectively
 
 \begin{align}
 	R_i(k+1) &= R_i(I_3+ \epsilon R_i(k)^T \hat{\omega}_i(k))   \label{eq23} \\
 	R_i(k+1) &= R_i(I_3+ \epsilon R_i(k)^T \hat{\omega}_i(k) \nonumber \\
 	& \quad + \frac{\epsilon^2}{2}  ( R_i(k)^T \hat{\omega}_i(k) ) ^T ( R_i(k)^T \hat{\omega}_i(k) )) \nonumber \\
 	& =R_i(k)+\epsilon \hat{\omega}_i(k)- \frac{\epsilon^2}{2} R_i(k) \hat{\omega}_i(k) \hat{\omega}_i(k). \label{eq24}
 \end{align}
Since the equation (\ref{eq24}), is not in the SDP form of the second order linearized equation, this equation should be relaxed using Shur complement. It can  be noted that, due to similarity and  lack of space, the only first order approximation for $R_i(k+1)$ is considered in this paper. Second order approximation can be the topic of further investigations.

To sum up, the discretized kinematics equations of the system with first order attitude approximation are given by
\begin{align}
	R_i(k+1) &= R_i(k)+ \epsilon \hat{\omega}_i(k) \\
	P_i(k+1) &= R_i(k) v_i(k).  \label{eq27}
\end{align}
Therefore, the discretized form of the optimization problem (\ref{eq18}) is given by
\begin{align}
	{\underset{v,\omega}{min}}&\quad \sum_{(i,j) \in \mathcal{E}} \left< \bar{p}_{ij} {b^{*}_{ij}}^T, R_i \right>+\frac{1}{2} \left< W^1v_i,v_i \right> +\frac{1}{2} \left< W^2 \omega_i,\omega_i \right> \nonumber \\
	\text{s.t.}&\quad 
	\left\{
	\begin{array}{ccc}
		R_i \in   conv(SO(3)) \qquad  \quad \vspace{0.5mm}  \\
		R_i(k+1) = R_i(k)+ \epsilon \hat{\omega}_i(k)  \vspace{1.3mm}   \\
		P_i(k+1) =  R_i(k) v_i(k) \qquad  \vspace{1.3mm}\\
	\end{array}
	\right. \label{eq30}
\end{align}
The next corollary expresses the convex relaxation for the optimization problem (\ref{eq30}) using the results of section III.

Corollary: The convex relaxation of the above problem is achieved as follows
\begin{align}
\max_{v_i , \Omega_i} & \quad \sum_{(i,j) \in \mathcal{E}} \left< \mathcal{A}^\dag (\bar{p}_{ij} {b^*_{ij}}),Z_i \right> + \frac{1}{2} \left< \tilde{W}^1\xi_i,\xi_i \right> + \frac{1}{2} \left< \tilde{W}^2\Omega_i,\Omega_i \right> \nonumber \\
\text{s.t.} & \quad 
\left\{
\begin{array}{rrr}
Z_i  \ge & 0 \qquad \qquad \qquad \quad  \\
\text{Tr}(Z_i)  =  & 1 \qquad \qquad \quad \qquad  \\
Z_i(k+1) = & Z_i(k)+\epsilon \Omega_i(k) \qquad \\
X_i(k+1) = & X_i(k) + \epsilon \xi_i(k) Z_i(k)
\end{array}
\right. \label{eq19}
\end{align}
while $\Omega_i=\mathcal{A}^ \dagger(\hat{\omega}_i) $, $\xi_i(k)=\mathcal{A}^ \dagger(v_i)$ and $ X_i(k)= \mathcal{A}^ \dagger (p_i(k))$.

\begin{figure*}[!ht]
	\begin{minipage}[c]{0.5\textwidth}
		\vspace*{0pt}
		\centering
		\includegraphics[width=0.8\textwidth]{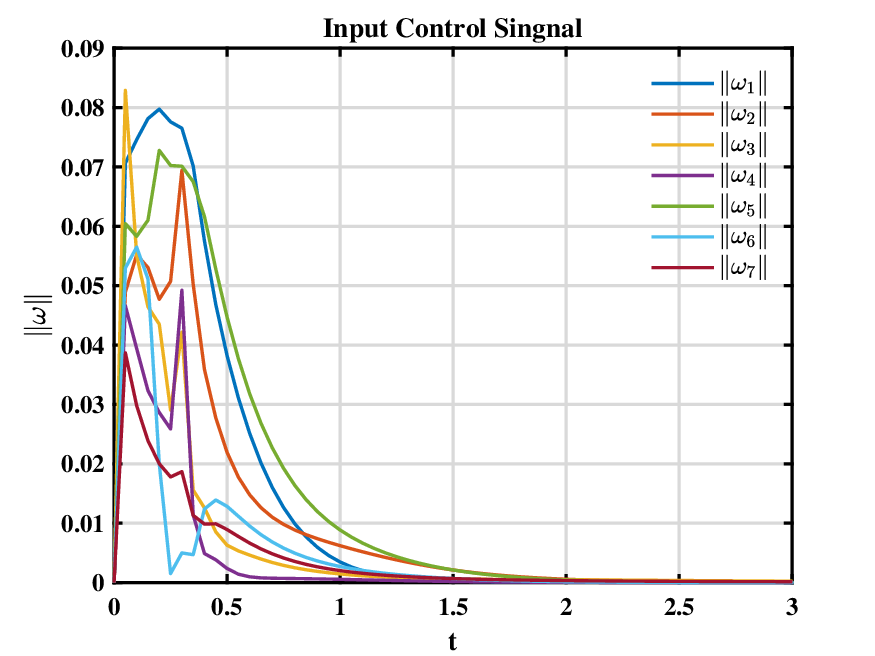}
		\centering
		\caption{The input control $\omega$ }
		\label{fig3}
	\end{minipage}
	\begin{minipage}[c]{0.5\textwidth}
		\vspace*{0pt}
		\centering
		\includegraphics[width=0.8\textwidth]{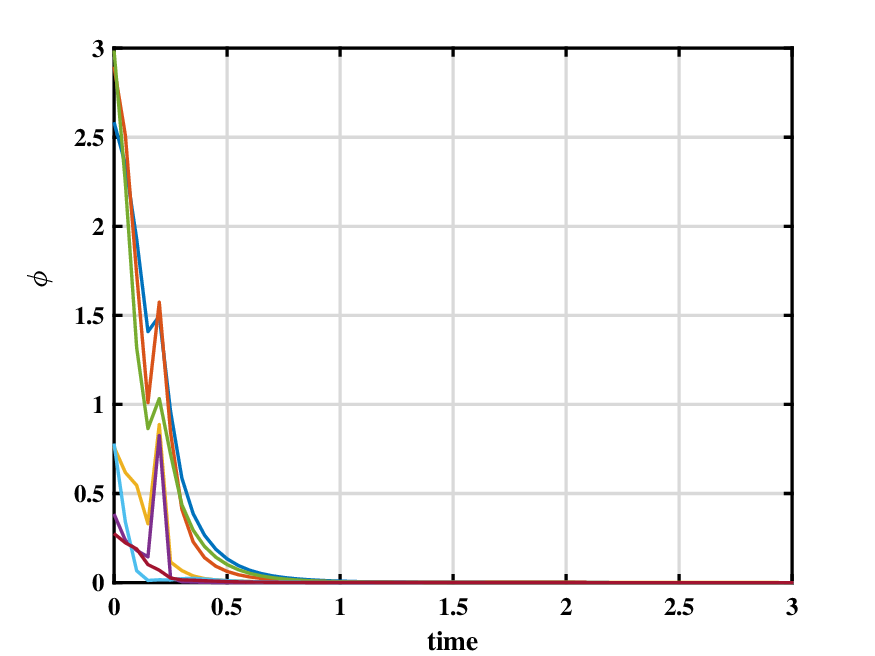}
		\centering
		\caption{The cost function $\phi_i$}
		\label{fig4}
	\end{minipage}
\end{figure*}

\section{Simulation Example}
In order to demonstrate the convergence of a network of rigid-body agents to a specified formation using the developed convex optimization method expressed in sections III and IV, in this section, an illustrative example is  conducted. The problem is constructed from seven agents with random initial orientations.  Converging the agents to the desired formation as the Fig. (\ref{fig1}) is the target of the optimiztion problem by minimizing the cost function (\ref{eq17}). The desired formation is defined as
$b^*=
\left[ 
b_{d_{12}}  b_{d_{23}}  b_{d_{34}}  b_{d_{15}}  b_{d_{56}}  b_{d_{67}} 
\right]$ 
matrix. Each column in This matrix relates to the bearing constraint for each edge.

 Initially, the optimization problem of (\ref{eq12}) that is relaxed to the convex problem (\ref{eq16}) is considered. The optimal $Z^*$ matrix for the 1-th agent is calculated as
 \begin{align}
 	Z^*_1= 
 	\left[
 	\begin{array}{cccc}
 		0.7011  &  0.1899 & -0.3764 &  0.1783 \\
 		0.1899  &  0.0515 & -0.1020  &  0.0483 \\
 		-0.3764 & -0.1020 &  0.2021  & -0.0957 \\
 		0.1783  &  0.0483 & -0.0957  &  0.0453
 	\end{array}
 	\right].  \nonumber
 \end{align}
The first agent's optimal attitude matrix, $R^*_1$ equals 
\begin{align}
	R^*_1= \mathcal{A}(Z^*)= \left[
	\begin{array}{lll}
		0.4930  &  0.6562  & -0.5713 \\
		-0.8494 &   0.5052 &  -0.1526 \\
		0.1885  &  0.5605  &  0.8064
	\end{array}
	\right]. \nonumber
\end{align}

Next, the formation control problem for achieving the optimal control signal for the cost function  (\ref{eq17})  is simulated. Similarly, the desired formation is given as (\ref{fig1}). As it is depicted in (\ref{fig2}), the agents converge to the desired formation. In this figure, the desired formation is shown red points, and the final formation is distinguished with blue points. The final formation of agents has the same relative bearing as the desired formation. Since the cost function is specifically defined in terms of bearing constraints, the final formation reaches the target formation with different scale. Besides, the control signal norm for all agents, and the cost function are depicted in figures (\ref{fig3}), and (\ref{fig4}), respectively. As it is conceived from the figures, the   desired relative bearing is achieved using minimum energy.

\section{Conclusion}
This paper has considered the formation problem for multi rigid-body agents on Lie group using optimization approach for minimizing the difference between instant  and desired bearings. In order to have a unique global response, the convex optimization method is implemented for minimizing the cost function. Since the rotational matrices space is not convex, the convex relaxation is done by embedding the rotational matrix space in convex hull  of the Lie group $SE(3)$. The applied method is based on the spectrahedral representation  of $conv{SO(3)}$.

The proposed  method in this paper presents a new approach for solving  formation strategies of multi agent systems with considering the geometry of the system's state space configuration. Since converging to the desired formation is based on the convex optimizing of the objective function, globally optimal convergence is guaranteed.

\end{document}